\newtheorem{theorem}{Theorem}[section]
\newtheorem{corollary}[theorem]{Corollary}
\newtheorem{lemma}[theorem]{Lemma}
\newcommand{\1}{\mathbf 1}
\newcommand{\F}{\mathrm F}
\newcommand{\R}{\mathbb R}
\title{A Note on Projection-Based Recovery of Clusters in Markov Chains}
\author{Sam Cole\\ 
University of Missouri, Department of Mathematics\\
\href{mailto:s.cole@missouri.edu}{\texttt{s.cole@missouri.edu}}}
\begin{document}

\maketitle

\abstract{
In this companion work to~\cite{cole2021clusters}, we discuss identification of clusters in Markov chains.  Let $T_0$ be the transition matrix of a \emph{purely clustered} Markov chain, i.e.\ a direct sum of $k \geq 2$ irreducible stochastic matrices.  Given a perturbation $T(x) = T_0 + xE$ of $T_0$ such that $T(x)$ is also stochastic, how small must $x$ be in order for us to recover the indices of the direct summands of $T_0$?  We give a simple algorithm based on the orthogonal projection matrix onto the left or right singular subspace corresponding to the $k$ smallest singular values of $I - T(x)$ which allows for exact recovery all clusters when $x = O\left(\frac{\sigma_{n - k}}{||E||_2\sqrt{n_1}}\right)$ and approximate recovery of a single cluster when $x = O\left(\frac{\sigma_{n - k}}{||E||_2}\right)$, where $n_1$ is the size of the largest cluster and $\sigma_{n - k}$ the $(k + 1)$st smallest singular value of $T_0$.}

\section{Introduction}

This note, which is a companion to~\cite{cole2021clusters}, addresses the problem of clustering in Markov chains: given the transition matrix of a finite Markov chain, can we partition the states into subsets (``clusters'') with the property that the density of transitions within the same cluster is high, while the density of transitions between different clusters is low?  Clustering in networks, sometimes called \emph{community detection}, is a problem of central importance in the burgeoning field of data science; see~\cite{fortunato2010community} for survey.  In the context of Markov chains, clustering has numerous applications, including identifying metastable conformations of biomolecules~\cite{DHFS}, identifying neighbourhoods in urban traffic networks \cite{CKS}, and identifying well-connected regions in protein-protein interaction networks \cite{Azadetal}.

Algorithms for identifying clusters in Markov chains typically rely on the eigenvalues and eigenvectors~\cite{DHFS,DW,CKS,BCFKS} or singular values and singular vectors~\cite{fritzsche2008svd} of the associated transition matrix $T$.  In~\cite{cole2021clusters} an algorithm based on the singular values and associated left singular vectors of the \emph{Laplacian matrix} $I - T$ is presented.  The contribution of this note is that it combines the Laplacian-based clustering from~\cite{cole2021clusters} with projection techniques commonly used in the \emph{planted partition problem} (a much-studied clustering problem on random graphs)~\cite{mcsherry,vu,ColeFR,cole2019recovering}.

The setting we consider in this note is the same as that of~\cite{cole2021clusters}: we are given a perturbation $T(x) := T_0 + xE$ of a \emph{completely decoupled} transition matrix $T_0$, i.e.\ a direct sum of $k \geq 2$ irreducible stochastic matrices $T_1, \ldots, T_k$ such that $T(x)$ is stochastic for sufficiently small $x$.  Our goal is then to ``recover'' the clusters (i.e.\ blocks of indices) of $T_0$ given only $T(x)$, i.e.\ to determine exactly which pairs of indices belong to the same block of $T_0$ and which pairs belong to different blocks.  We seek to accomplish the following for sufficiently small $x$:
\begin{itemize}
\item	\textbf{Exact recovery of all clusters} (Section~\ref{sec:l2diff}).  We wish to determine \emph{exactly} the indices of \emph{each} block of $T_0$, up to a permutation of the blocks.  We present an algorithm which accomplishes  this when $x = O\left(\frac{\sigma_{n - k}}{||E||_2\sqrt{n_1}}\right)$, where $n_1$ is the size of the largest cluster and $\sigma_{n - k}$ the $(k + 1)$st smallest singular value of $T_0$.
\item	\textbf{Approximate recovery of one cluster} (Section~\ref{sec:approx}).  We wish to produce a set $\hat S$ with small symmetric difference with one of the clusters.  We present an algorithm which accomplishes this when $x = O\left(\frac{\sigma_{n - k}}{||E||_2}\right)$, assuming that all of the clusters are the same size.
\end{itemize}
Surprisingly, if the algorithm for approximate recovery of one cluster is adapted for exact recovery, it yields no improvement over the algorithm for exact recovery of all clusters.  It remains to be seen whether these guarantees are optimal.

\subsection{Notation}

We will use the following notation throughout this work:

\begin{itemize}
\item	$T_0 := T_1 \oplus \ldots \oplus T_k \in \R^{n \times n}$, where $T_i$ is an irreducible stochastic matrix with index set $S_i$.  Note that $S_1, \ldots, S_n$ are known as \emph{clusters} and partition the index set $[n]$.

\item	$S(i)$ -- the unique cluster containing index $i$.

\item	$n_1 \geq \ldots \geq n_k$ -- the orders of $T_1, \ldots, T_k$ arranged in nonincreasing order.  Note that $n_i$ is not necessarily the order of $T_i$. 

\item	$T(x) := T_0 + xE$, where $x \geq 0$ and $T(x)$ is stochastic for some $x > 0$.  Note that each row of $E$ must sum to 0.

\item	$e_1, \ldots, e_n$ -- standard basis vectors in $\R^n$.

\item	$I_n$ -- $n \times n$ identity matrix.  The subscript is omitted if it is clear from context.

\item	$\sigma_1(A) \geq \ldots \geq \sigma_{\min\{m, n\}}(A) \geq 0$ -- the singular values of a matrix $A \in \R^{m \times n}$.

\item	$P_k^{L}(A)$, $P_k^{R}(A)$ -- the orthogonal projection matrices onto the left and right singular subspaces corresponding to the \emph{smallest} $k$ singular values of a matrix $A$, respectively.

\item	$P^L(x) := P_k^L(I - T(x))$, $P^R(x) := P_k^R(I - T(x))$.
\end{itemize}
Note also that we sometimes refer to a Markov chain and its transition matrix interchangeably.

\section{Deviation in orthogonal projection matrices}

For a matrix $A \in \R^{m \times n}$, consider the linear subspaces $\mathbf U$ and $\mathbf V$ spanned by the left and right singular vectors corresponding to the smallest $k$ singular values of $A$.  Let $P_k^{L}(A)$ and $P_k^{R}(A)$ denote the orthogonal projection matrices onto $\mathbf U$ and $\mathbf V$, respectively.  If $A$ is symmetric, then these are both equal to the orthogonal projection matrix onto the subspace spanned by the eigenvalues corresponding to the $k$ smallest eigenvalues of $A$ in absolute value, which we denote by $P_k(A)$.  

\begin{lemma}\label{lemma:projdiff}
Let $A, B \in \R^{n \times n}$ be symmetric and $\beta > \alpha > 0$.  If the $k$ smallest eigenvalues in absolute value of both $A$ and $B$ are $\leq \alpha$, and all other eigenvalues of both $A$ and $B$ are $\geq \beta$ in absolute value, then
\[||P_k(A) - P_k(B)||_2 \leq \frac{2||A - B||_2}{\beta - \alpha}.\]
\end{lemma}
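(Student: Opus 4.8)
The plan is to run the standard Davis--Kahan-type argument by hand so that the constant $2/(\beta-\alpha)$ falls out transparently. Write $P := P_k(A)$ and $Q := P_k(B)$, and let $P^\perp := I - P$, $Q^\perp := I - Q$ be the complementary orthogonal projections; by hypothesis $P$ (resp.\ $Q$) projects onto the span of the eigenvectors of $A$ (resp.\ $B$) with eigenvalue $\leq \alpha$ in absolute value, while $P^\perp$ (resp.\ $Q^\perp$) projects onto the span of those with eigenvalue $\geq \beta$ in absolute value. The starting point is the identity
\[ P - Q = P(I-Q) - (I-P)Q = PQ^\perp - P^\perp Q, \]
together with the triangle inequality
\[ ||P - Q||_2 \leq ||PQ^\perp||_2 + ||P^\perp Q||_2. \]
This already accounts for the factor of $2$: it suffices to show that each of the two cross terms is at most $||A-B||_2/(\beta-\alpha)$.

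I would focus on $R := P^\perp Q$, the bound for $PQ^\perp$ being symmetric. Because $P^\perp$ commutes with $A$ and $Q$ commutes with $B$ (this is where symmetry, i.e.\ normality, of $A$ and $B$ is used), one has $P^\perp A = P^\perp A P^\perp$ and $QB = QBQ$, and a short computation gives the Sylvester-type relation
\[ A R - R B = P^\perp A Q - P^\perp B Q = P^\perp (A - B) Q =: Y, \]
where $||Y||_2 \leq ||A-B||_2$. Since $R = P^\perp R Q$, I may regard $R$ as a map from $\mathrm{range}(Q)$ to $\mathrm{range}(P^\perp)$ and replace $A, B$ by their compressions $A_1 := P^\perp A P^\perp$ and $B_1 := QBQ$, which satisfy $A_1 R - R B_1 = Y$. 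Here $A_1$ is invertible on $\mathrm{range}(P^\perp)$ with $||A_1^{-1}||_2 \leq 1/\beta$ (its eigenvalues there are $\geq \beta$ in absolute value), while $||B_1||_2 \leq \alpha$.

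The estimate on $R$ then follows from a one-line fixed-point bound: rewriting $A_1 R - R B_1 = Y$ as $R = A_1^{-1} Y + A_1^{-1} R B_1$ and taking spectral norms gives
\[ ||R||_2 \leq \tfrac{1}{\beta}||A-B||_2 + \tfrac{\alpha}{\beta}||R||_2, \]
and since the hypothesis $\beta > \alpha$ makes $\alpha/\beta < 1$, this rearranges to $||P^\perp Q||_2 = ||R||_2 \leq ||A-B||_2/(\beta - \alpha)$. Running the identical argument with the roles of the ``small'' and ``large'' blocks interchanged (now using $||(Q^\perp B Q^\perp)^{-1}||_2 \leq 1/\beta$ and $||PAP||_2 \leq \alpha$) yields $||PQ^\perp||_2 \leq ||A-B||_2/(\beta-\alpha)$, and substituting both bounds into the triangle inequality above completes the proof.

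The one genuinely load-bearing step is the Sylvester identity $AR - RB = P^\perp(A-B)Q$, together with the fact that the separation $|\lambda - \mu| \geq |\lambda| - |\mu| \geq \beta - \alpha$ between the two eigenvalue ranges is exactly what makes the compressed equation solvable with the stated bound; everything else is norm bookkeeping. The only point I expect to require care is verifying the commutation relations that turn $P^\perp(A-B)Q$ into $AR - RB$, since that is precisely where the symmetry of $A$ and $B$ enters and where an error would silently break the gap estimate.
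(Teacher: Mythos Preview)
Your argument is correct: the decomposition $P-Q = PQ^\perp - P^\perp Q$, the Sylvester identity $AR-RB = P^\perp(A-B)Q$ obtained from the commutation of spectral projections with their parent matrices, and the contraction estimate coming from $\|A_1^{-1}\|_2 \leq 1/\beta$ and $\|B_1\|_2 \leq \alpha$ all go through as you describe (with the understanding that $A_1^{-1}$ means the inverse of $A_1$ on $\mathrm{range}(P^\perp)$, i.e.\ the Moore--Penrose pseudoinverse, which is harmless since $R$ and $Y$ already live in that range).

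The paper, however, does \emph{not} argue this way. It invokes the holomorphic functional calculus: one writes
\[
P_k(A) \;=\; \frac{1}{2\pi i}\oint_\Gamma (zI-A)^{-1}\,dz,
\]
where $\Gamma$ is an infinite vertical strip (a rectangle with horizontal sides pushed to $\pm i\infty$) whose vertical sides sit halfway between the two eigenvalue groups, so that every point of $\Gamma$ is at distance at least $(\beta-\alpha)/2$ from the spectra of both $A$ and $B$. Then $P_k(A)-P_k(B)$ becomes the contour integral of $(zI-A)^{-1}(A-B)(zI-B)^{-1}$, and the factor $2/(\beta-\alpha)$ drops out of the resolvent bound on $\Gamma$ after the horizontal contributions vanish in the limit.

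Your Davis--Kahan/Sylvester route is more elementary---no contour integration, no complex analysis---and it makes completely transparent that the gap $\beta-\alpha$ is exactly the separation constant of a Sylvester equation. The resolvent approach, by contrast, is the more portable one: it extends with essentially no change to other analytic spectral functionals and to settings (e.g.\ non-normal operators) where the clean block structure underlying your commutation identities is unavailable, which is presumably why the paper points to it.
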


\begin{proof}[Proof sketch]
Apply the Cauchy integral formula, as in the proof of~\cite[Lemma~4]{cole2019recovering}.  Integrate over the rectangle in the complex plane whose left and right sides are on the lines $x = \pm\frac{\beta - \alpha}2$ and whose top and bottom sides are on the lines $y = \pm M$, and let $M \to \infty$.
\end{proof}

For a matrix $A \in \R^{n \times n}$, define $S(A) := 
\begin{bmatrix}
0	& A	\\
A^\top	&	0
\end{bmatrix}
\in \R^{2n \times 2n}$.  We will call this the \emph{symmetrization} of $A$.  Observe that the eigenvalues of $S(A)$ are the singular values of $A$ and their opposites.  More specifically, if $\sigma$ is a singular value of $A$ with left and right singular vectors $u$ and $v$, respectively, then $\sigma$ is an eigenvalue of $S(A)$ with eigenvector $
\begin{bmatrix}
u	\\
v	\\
\end{bmatrix}
$, and $-\sigma$ is an eigenvalue of $S(A)$ with eigenvector $
\begin{bmatrix}
u	\\
-v	\\
\end{bmatrix}
$.  Also note that $||S(A)||_2 = ||A||_2$, since the spectral norm of a symmetric matrix is the maximum absolute value among its eigenvalues.  Finally, it is not difficult to show that in fact $P_{2k}(S(A)) = P_k^L(A) \oplus P_k^R(A)$.

Thus, we get the following as an immediate corollary to Lemma~\ref{lemma:projdiff}:

\begin{corollary}\label{cor:asymmprojdiff}
Let $A, B \in \R^{n \times n}$ and $\beta > \alpha > 0$.  If the $k$ smallest singular values of both $A$ and $B$ are $\leq \alpha$, and all remaining singular values of both $A$ and $B$ are $\geq \beta$, then 
\[||P_k^L(A) - P_k^L(B)||_2,\ ||P_k^R(A) - P_k^R(B)||_2 \leq \frac{2||A - B||_2}{\beta - \alpha}.\]
\end{corollary}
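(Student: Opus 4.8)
The plan is to reduce Corollary~\ref{cor:asymmprojdiff} to Lemma~\ref{lemma:projdiff} by passing to the symmetrizations $S(A)$ and $S(B)$, exploiting the three structural facts about $S(\cdot)$ recorded just before the statement. First I would invoke the identity $P_{2k}(S(A)) = P_k^L(A) \oplus P_k^R(A)$ and the analogous identity for $B$, so that
\[
P_{2k}(S(A)) - P_{2k}(S(B)) = \bigl(P_k^L(A) - P_k^L(B)\bigr) \oplus \bigl(P_k^R(A) - P_k^R(B)\bigr).
\]
Because a block-diagonal (direct sum) operator has spectral norm equal to the maximum of the spectral norms of its blocks, the left-hand side has norm $\max\{\|P_k^L(A) - P_k^L(B)\|_2,\ \|P_k^R(A) - P_k^R(B)\|_2\}$. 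Thus bounding $\|P_{2k}(S(A)) - P_{2k}(S(B))\|_2$ simultaneously bounds both quantities we care about, which is exactly what the corollary claims.

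Next I would verify that $S(A)$ and $S(B)$ satisfy the hypotheses of Lemma~\ref{lemma:projdiff} with the same $\alpha$, $\beta$ and with $k$ replaced by $2k$. Since the eigenvalues of $S(A)$ are precisely the singular values of $A$ together with their negatives, the singular-value hypotheses on $A$ translate directly: the $2k$ smallest eigenvalues of $S(A)$ in absolute value are exactly $\pm$(the $k$ smallest singular values of $A$), so they are $\leq \alpha$, and every remaining eigenvalue of $S(A)$ is $\pm$(a singular value $\geq \beta$), hence $\geq \beta$ in absolute value. The same reasoning applies to $S(B)$. One should note that $\alpha < \beta$ is preserved, so the spectral gap needed by the lemma is intact.

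Applying Lemma~\ref{lemma:projdiff} to $S(A)$ and $S(B)$ with parameter $2k$ then yields
\[
\|P_{2k}(S(A)) - P_{2k}(S(B))\|_2 \leq \frac{2\|S(A) - S(B)\|_2}{\beta - \alpha}.
\]
Finally I would use the remaining recorded fact $\|S(M)\|_2 = \|M\|_2$, together with the linearity $S(A) - S(B) = S(A - B)$ (immediate from the block definition), to rewrite the numerator as $2\|A - B\|_2$. Combining this with the direct-sum norm identity from the first paragraph gives both desired bounds at once.

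The only point requiring genuine care — rather than routine bookkeeping — is the claim $P_{2k}(S(A)) = P_k^L(A) \oplus P_k^R(A)$, which the excerpt asserts but does not prove. The subtlety is that the eigenvectors of $S(A)$ have the coupled form $\begin{bmatrix} u \\ \pm v \end{bmatrix}$ rather than lying cleanly in the top or bottom coordinate block, so it is not immediately obvious that the projection onto the span of the $2k$ smallest-magnitude eigenvectors splits as a direct sum of projections onto the left and right singular subspaces. I expect to confirm this by checking that, as $\sigma$ ranges over the $k$ smallest singular values, the pair $\begin{bmatrix} u \\ v \end{bmatrix}$, $\begin{bmatrix} u \\ -v \end{bmatrix}$ spans the same two-dimensional space (per singular value, and with suitable modification at a zero singular value) as $\begin{bmatrix} u \\ 0 \end{bmatrix}$, $\begin{bmatrix} 0 \\ v \end{bmatrix}$; summing over the relevant singular values shows the $2k$-dimensional invariant subspace is exactly $\mathbf U \oplus \mathbf V$, whence the projections agree. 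Since the corollary is stated as ``immediate,'' I would present this verification tersely and let the symmetrization dictionary carry the rest.
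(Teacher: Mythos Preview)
Your proposal is correct and follows essentially the same route as the paper: apply Lemma~\ref{lemma:projdiff} to the symmetrizations $S(A)$ and $S(B)$ with parameter $2k$, then use $P_{2k}(S(\cdot)) = P_k^L(\cdot)\oplus P_k^R(\cdot)$ and the block-diagonal norm identity to read off both bounds. Your treatment is in fact more thorough than the paper's, which leaves the verification of the hypotheses and the step $\|S(A)-S(B)\|_2=\|A-B\|_2$ implicit.
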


\begin{proof}
From Lemma~\ref{lemma:projdiff} we immediately get
\[||P_{2k}(S(A)) - P_{2k}(S(B))||_2 \leq \frac{2||A - B||_2}{\beta - \alpha}.\]
The conclusion follows from the fact that
\begin{eqnarray*}
||P_{2k}(S(A)) - P_{2k}(S(B))||_2	& =	& ||(P_k^L(A) \oplus P_k^R(A)) - (P_k^L(B) \oplus P_k^R(B))||_2	\\
	& =	& ||(P_k^L(A) - P_k^L(B)) \oplus (P_k^R(A) - P_k^R(B))||_2	\\
	& =	& \max\{||P_k^L(A) - P_k^L(B)||_2, ||P_k^R(A) - P_k^R(B)||_2\}.\qedhere
\end{eqnarray*}
\end{proof}

Applying Corollary~\ref{cor:asymmprojdiff} to the difference of a direct sum of irreducible stochastic matrices and a perturbation thereof, we get the following:

\begin{theorem}\label{thm:markovprojdiff}
Let $T(x) = T_0 + x E \in \R^n$, where $T_0 = T_1 \oplus \ldots \oplus T_k$, $T_i$ is an irreducible stochastic matrix for $i = 1, \ldots k$, and $T(x)$ is stochastic for sufficiently small $x$.  Let $P^L(x) := P_k^L(I - T(x))$ and $P^R(x) := P_k^R(I - T(x))$.  Then
\begin{equation}\label{eqn:markovprojdiff}
||P^L(x) - P^L(0)||_2,\ ||P^R(x) - P^R(0)||_2 \leq \frac{2x||E||_2}{\sigma_{n - k}(I - T_0) - 2x||E||_2},
\end{equation}
provided that $2x||E||_2 < \sigma_{n - k}(I - T_0)$.

Furthermore, $P^L(0) = \bigoplus_{i = 1}^ku_iu_i^\top$ and $P^R(0) = \bigoplus_{i = 1}^kv_iv_i^\top$, where $u_i$ an $v_i$ are (respectively) left and right Perron vectors of $T_i$ with unit $\ell_2$-norm.
\end{theorem}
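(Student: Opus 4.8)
The plan is to derive the bound directly from Corollary~\ref{cor:asymmprojdiff} applied with $A = I - T(x)$ and $B = I - T_0$, so the main work is verifying the hypotheses of that corollary and identifying the correct values of $\alpha$ and $\beta$. First I would observe that $A - B = -xE$, so $\|A - B\|_2 = x\|E\|_2$, which supplies the numerator. The heart of the matter is choosing $\alpha$ and $\beta$ so that the singular value separation hypothesis holds for \emph{both} $I - T_0$ and $I - T(x)$ simultaneously. For $B = I - T_0$ this is clean: since $T_0$ is a direct sum of $k$ irreducible stochastic matrices, each block $T_i$ has $1$ as a simple eigenvalue (Perron--Frobenius), so $I - T_0$ has $0$ as an eigenvalue of multiplicity exactly $k$, hence its $k$ smallest singular values are $0$ and its $(k+1)$st smallest is $\sigma_{n-k}(I - T_0) > 0$.

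The key step is controlling how much the singular values of $I - T_0$ can move under the perturbation to $I - T(x)$. For this I would invoke Weyl's inequality for singular values: since $\|(I - T(x)) - (I - T_0)\|_2 = x\|E\|_2$, every singular value of $I - T(x)$ lies within $x\|E\|_2$ of the corresponding singular value of $I - T_0$. Consequently the $k$ smallest singular values of $I - T(x)$ are each $\leq x\|E\|_2$, and its $(k+1)$st smallest is $\geq \sigma_{n-k}(I - T_0) - x\|E\|_2$. I would therefore set $\alpha = x\|E\|_2$ and $\beta = \sigma_{n-k}(I - T_0) - x\|E\|_2$, which makes the separation hypothesis hold for both matrices (the bound for $I - T_0$ is even more generous). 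The requirement $\beta > \alpha > 0$ becomes $\sigma_{n-k}(I - T_0) - x\|E\|_2 > x\|E\|_2 > 0$, i.e.\ exactly the stated condition $2x\|E\|_2 < \sigma_{n-k}(I - T_0)$. Plugging $\beta - \alpha = \sigma_{n-k}(I - T_0) - 2x\|E\|_2$ into Corollary~\ref{cor:asymmprojdiff} yields~\eqref{eqn:markovprojdiff}.

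For the ``furthermore'' claim I would compute $P^L(0)$ and $P^R(0)$ explicitly from the block structure. The left and right singular subspaces of $I - T_0$ corresponding to its $k$ zero singular values are the left and right null spaces of $I - T_0$, which decompose blockwise as the direct sum of the null spaces of $I - T_i$. By Perron--Frobenius, each irreducible stochastic $T_i$ has $1$ as a simple eigenvalue, so $\ker(I - T_i)$ is spanned by the right Perron vector and the left null space by the left Perron vector; normalizing each to unit $\ell_2$-norm gives $v_i$ and $u_i$. The orthogonal projection onto a one-dimensional span of a unit vector $w$ is $ww^\top$, and since the blocks occupy disjoint index sets the overall projections are the direct sums $\bigoplus_i u_i u_i^\top$ and $\bigoplus_i v_i v_i^\top$.

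I expect the main obstacle to be a bookkeeping subtlety rather than a deep one: one must be careful that the indexing convention $\sigma_1 \geq \sigma_2 \geq \cdots$ (largest first) is reconciled with the phrase ``$k$ smallest singular values,'' so that $\sigma_{n-k}(I - T_0)$ is indeed the $(k+1)$st smallest and the Weyl perturbation is applied to the right end of the spectrum. A secondary point worth checking is that the singular subspaces of $I-T(x)$ remain well-defined and of dimension exactly $k$ under the perturbation — this is guaranteed precisely by the gap condition $2x\|E\|_2 < \sigma_{n-k}(I - T_0)$, which keeps the $k$ smallest singular values strictly separated from the rest and hence makes $P^L(x), P^R(x)$ unambiguous.
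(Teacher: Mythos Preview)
Your proposal is correct and follows essentially the same route as the paper: apply Weyl's inequalities to locate the singular values of $I - T(x)$, set $\alpha = x\|E\|_2$ and $\beta = \sigma_{n-k}(I - T_0) - x\|E\|_2$, and invoke Corollary~\ref{cor:asymmprojdiff}; then read off $P^L(0)$ and $P^R(0)$ from the block-diagonal Perron structure of $T_0$. Your additional remarks on indexing conventions and well-definedness of the singular subspaces under the gap condition are accurate and go slightly beyond what the paper spells out.
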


\begin{proof}
By Weyl's inequalities we have
\[\sigma_i(I - T(x)) \leq \sigma_i(I - T_0) + x||E||_2 = x||E||_2\]
for $ i > n - k$ and
\[\sigma_i(I - T(x)) \geq \sigma_i(I - T_0) - x||E||_2 \geq \sigma_{n - k}(I - T_0) - x||E||_2\]
for $i \leq n - k$.  Thus,~\eqref{eqn:markovprojdiff} follows by applying Corollary~\ref{cor:asymmprojdiff} with $\alpha = x||E||_2$ and $\beta = \sigma_{n - k}(I - T_0) - x||E||_2$.

Now observe that if we define $\tilde u_i \in \R^n$ by putting $u_i$ in the indices corresponding to $T_i$ and 0s elsewhere, then $\tilde u_1, \ldots, \tilde u_k$ is an orthonormal basis for the left null space of $I - T_0$; hence, 
\[P^L(0) = \sum_{i = 1}^k\tilde u_i\tilde u_i^\top = \bigoplus_{i = 1}^ku_iu_i^\top.\]
The conclusion about $P^R(0)$ follows similarly. 
\end{proof}

\section{Exact recovery of all clusters}\label{sec:l2diff}


Let us define
\begin{equation}\label{eqn:epsilondef}
\epsilon := \frac{2x||E||_2}{\sigma_{n - k}(I - T_0) - 2x||E||_2}.
\end{equation}
From~\eqref{eqn:markovprojdiff} we have $||(P^R(x) - P^R(0))e_j||_2 \leq \epsilon$ for $j = 1, \ldots n$, where $e_j \in \R^n$ is the $j$th standard basis vector.  In other words, the $j$th columns of $P^R(x)$ and $P^R(0)$ differ by at most $\epsilon$ in $\ell_2$-norm.

\begin{theorem}\label{thm:l2diff}
Let $T(x) = T_0 + x E \in \R^n$, where $T_0 = T_1 \oplus \ldots \oplus T_k$, $T_i$ is an irreducible stochastic matrix with index set $S_i$ for $i = 1, \ldots k$, and $T(x)$ is stochastic for sufficiently small $x$.  Let $P_k^R(x)$ be defined as in Theorem~\ref{thm:markovprojdiff}, and let $\epsilon$ be defined by~\eqref{eqn:epsilondef}.  For $j = 1, \ldots, n$, let $S(j)$ be the block $S_i$ containing $j$.  Then the following are true:
\begin{itemize}
\item	If $S(i) = S(j)$, then $||P_k^R(x)e_i - P_k^R(x)e_j||_2 \leq 2\epsilon$.  
\item	If $S(i) \neq S(j)$, then $||P_k^R(x)e_i - P_k^R(x)e_j||_2 \geq \sqrt{\frac1{|S(i)|} + \frac1{|S(j)|}} - 2\epsilon$.
\end{itemize}
\end{theorem}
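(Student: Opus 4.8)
The plan is to reduce both inequalities to the idealized projection $P^R(0)$, whose columns can be written down explicitly, and then transfer the resulting distances to $P^R(x)$ using the perturbation bound already in hand. Since each $T_i$ is stochastic, its right Perron vector is the all-ones vector, so the unit-norm right Perron vector is $v_i = \frac{1}{\sqrt{|S_i|}}\1$. By Theorem~\ref{thm:markovprojdiff}, $P^R(0) = \bigoplus_{i=1}^k v_i v_i^\top = \bigoplus_{i=1}^k \frac{1}{|S_i|}\1\1^\top$, so that $P^R(0) e_j = \frac{1}{|S(j)|}\1_{S(j)}$, the vector equal to $\frac{1}{|S(j)|}$ on block $S(j)$ and $0$ elsewhere. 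The key structural fact is that this column depends only on the block $S(j)$, not on $j$ itself. Throughout I would use the columnwise bound $\|(P^R(x) - P^R(0))e_\ell\|_2 \leq \epsilon$ noted immediately after Theorem~\ref{thm:markovprojdiff}, which follows from~\eqref{eqn:markovprojdiff} since $\|e_\ell\|_2 = 1$.

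For the first bullet (same block), since $S(i) = S(j)$ the idealized columns coincide, $P^R(0) e_i = P^R(0) e_j$. I would then write $P^R(x) e_i - P^R(x) e_j = (P^R(x) - P^R(0)) e_i - (P^R(x) - P^R(0)) e_j$, so that the $P^R(0)$ terms cancel, and apply the triangle inequality together with the columnwise bound to each summand to obtain the $2\epsilon$ upper bound.

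For the second bullet (different blocks), I would apply the reverse triangle inequality to the same decomposition, giving $\|P^R(x) e_i - P^R(x) e_j\|_2 \geq \|P^R(0) e_i - P^R(0) e_j\|_2 - 2\epsilon$. It then remains to evaluate the idealized distance. Because $S(i) \neq S(j)$, the vectors $P^R(0) e_i = \frac{1}{|S(i)|}\1_{S(i)}$ and $P^R(0) e_j = \frac{1}{|S(j)|}\1_{S(j)}$ have disjoint supports, so their squared distance is the sum of their squared norms, namely $\frac{1}{|S(i)|^2}|S(i)| + \frac{1}{|S(j)|^2}|S(j)| = \frac{1}{|S(i)|} + \frac{1}{|S(j)|}$. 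Taking square roots yields the stated lower bound.

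The argument is essentially a two-term triangle-inequality sandwich, so there is no genuine obstacle once the idealized quantities are computed. The only point needing care is the explicit identification of $P^R(0)$ and the observation that its columns are block-constant with supports that are disjoint across distinct blocks; getting the normalization $\frac{1}{|S_i|}$ correct is precisely what produces the exact constant $\sqrt{\frac1{|S(i)|} + \frac1{|S(j)|}}$ in the inter-cluster separation.
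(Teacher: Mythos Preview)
Your proposal is correct and follows essentially the same route as the paper: identify the columns of $P^R(0)$ as $\frac{1}{|S(j)|}\1_{S(j)}$ via the right Perron vectors, use the columnwise perturbation bound $\|(P^R(x)-P^R(0))e_\ell\|_2\le\epsilon$, and then apply the (reverse) triangle inequality to $P^R(x)e_i-P^R(x)e_j=(P^R(0)e_i-P^R(0)e_j)+(P^R(x)-P^R(0))e_i-(P^R(x)-P^R(0))e_j$. The paper packages both cases into the single estimate $\bigl|\,\|\hat p_i-\hat p_j\|_2-\|p_i-p_j\|_2\,\bigr|\le 2\epsilon$, but this is exactly your two bullets combined.
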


\begin{proof}
For convenience, let us define $P = [p_1 ,\ldots, p_n] := P_k^R(0)$, $\hat P = [\hat p_1, \ldots, \hat p_n] := P_k^R(x)$.  Then by Theorem~\ref{thm:markovprojdiff} and~\eqref{eqn:epsilondef} we have $||P - \hat P||_2 \leq \epsilon$, hence $||p_j - \hat p_j||_2 = ||(P - \hat P)e_j||_2 \leq \epsilon$ for $j = 1, \ldots, n$.

Now let us compare columns of $\hat P$:
\[\hat p_i - \hat p_j = p_i - p_j + (\hat p_i - p_i) + (p_j - \hat p_j).\]
Hence
\begin{equation}\label{eqn:coldiff}
\Big|||\hat p_i - \hat p_j||_2 - ||p_i - p_j||_2\Big| \leq ||(\hat p_i - \hat p_j) - (p_i - p_j)||_2 = ||(\hat p_i - p_i) + (p_j - \hat p_j)||_2 \leq 2\epsilon.
\end{equation}

Now, using the normalized indicator vectors of the clusters as an orthonormal basis for the right null-space of $I - T_0$, Theorem~\ref{thm:markovprojdiff} gives
\begin{equation}\label{eqn:pi-pj}
||p_i - p_j||_2 = \left\{
\begin{array}{ll}
0	& \textrm{if } S(i) = S(j),	\\
\sqrt{\frac1{|S(i)|} + \frac1{|S(j)|}}	& \textrm{else.}
\end{array}
\right.
\end{equation}
Thus, if $S(i) = S(j)$, then~\eqref{eqn:coldiff} gives $||\hat p_i - \hat p_j||_2 \leq 2\epsilon$; otherwise, we get $||\hat p_i - \hat p_j||_2 \geq \sqrt{\frac1{|S(i)|} + \frac1{|S(j)|}} - 2\epsilon$.  This completes the proof.
\end{proof}

Theorem~\ref{thm:l2diff} suggests the following algorithm to recover the clusters, provided that $2\epsilon < \sqrt{\frac1{n_1} + \frac1{n_2}} - 2\epsilon$ (where $n_1 \geq \ldots \geq n_k$ are the cluster sizes): 
\begin{algorithm}[H]
\caption{Exact recovery of all clusters}\label{alg:exact}
For each pair of distinct indices $i, j$, label $i$ and $j$ as being in the same cluster iff.\ $||P^R(x)e_i - P^R(x)e_j||_2 \leq \tau$, and return the resulting partition of the indices.  
\end{algorithm}
By Theorem~\ref{thm:l2diff}, if $\epsilon < \frac14\sqrt{\frac1{n_1} + \frac1{n_2}}$, then Algorithm~\ref{alg:exact} with $\tau = 2\epsilon$ can distinguish between pairs of indices in the same cluster and pairs of indices in different clusters.  By~\eqref{eqn:epsilondef}, this is satisfied when $\displaystyle x = O\left(\frac{\sigma_{n - k}(I - T_0)}{||E||_2\sqrt{n_1}}\right)$.

Note that we arbitrarily choose to use $P^R(x)$ in Theorem~\ref{thm:l2diff} and Algorithm~\ref{alg:exact}; we could have just as easily used $P^L(x)$.  Moreover, $P^R(x)$ and $P^L(x)$ can be computed efficiently and reliably as follows.  Let $I - T(x) = U\Sigma V^\top$ be the singular value decomposition of $I - T(x)$, and let $U_k, V_k \in \R^{n \times k}$ be the last $k$ columns of $U$ and $V$, respectively.  Then $P^L(x) = U_kU_k^\top$, and $P^R(x) = V_kV_k^\top$.  Such SVD-based computations are numerically stable, and there are specialized methods that allow $U_k$ and $V_k$ to be computed without computing the entire SVD~\cite{vanHuff1,vanHuff2}. 

\section{Approximate recovery of one cluster}\label{sec:approx}

A slightly more refined analysis shows that with a looser restriction on $x$ we are able to approximately recover \emph{one} of the clusters.

Let us assume for simplicity that all clusters are of size $s := n / k$.  Let $S(j)$ be defined as in Theorem~\ref{thm:l2diff} and 
\[\hat S(j) := \left\{i : P^R(x)_{ij} \geq \frac1{2s}\right\}.\]
We will see that for some $j$ $\hat S(j)$, which can be determined empirically from $T(x)$, is a good approximation to $S(j)$.  

Observe that by Theorem~\ref{thm:markovprojdiff} we have
\[||P_k^R(x) - P_k^R(0)||_\F^2 = \sum_{j = 1}^n||(P_k^R(x) - P_k^R(0))e_j||_2^2 \leq 2k\epsilon^2.\]
Since the entries of $P_k^R(0)$ are either $0$ or $1 / s$, by averaging we get
\[\frac{|S(j) \triangle \hat S(j)|}{4s^2} \leq ||(P_k^R(x) - P_k^R(0))e_j||_2^2 \leq \frac{2k\epsilon^2}n\]
for some $j$, where $\triangle$ denotes symmetric difference.  Hence, $|S(j) \triangle \hat S(j)| \leq \frac{8\epsilon^2s^2k}n = 8\epsilon^2s$ for some $j$.  How do we identify such a ``good'' $j$?  

Arguing as in~\cite[Section~8.1]{cole2019recovering}, we can show that if $|\hat S(j)| \leq (1 + 8\epsilon^2)s$ and $||P_k^R(x)\1_{\hat S(j)}||_2 \geq (1 - \epsilon - O(\epsilon^2))\sqrt s$, then $|S(j) \cap \hat S(j)| \geq (1 - 3\epsilon)s$ and $|S(j) \triangle \hat S(j)| \leq 4\epsilon s$, and there will always exist such an $\hat S(j)$.  Thus, we can identify a $\hat S(j)$ with small symmetric difference with one of the blocks of $T_0$ by taking the one which maximizes $||P_k^R(x)\1_{\hat S(j)}||_2$ among those with $|\hat S(j)| \leq (1 + 8\epsilon^2)s$.  Hence, by~\eqref{eqn:epsilondef}, if $\displaystyle x = O\left(\frac{\sigma_{n - k}(I - T_0)}{||E||_2}\right)$ we can approximately recover \emph{one} of the clusters, i.e., construct a set of indices in which a large proportion come from a single cluster.  This yields the following algorithm:
\begin{algorithm}[H]
\caption{Approximate recovery of one cluster}\label{alg:approx}
Let $j$ be the index which maximizes $||P_k^R(x)\1_{\hat S(j)}||_2$ among those with $|\hat S(j)| \leq (1 + 8\epsilon^2)s$, and return $\hat S(j)$.
\end{algorithm}
Again, note that we could have used $P_k^L(x)$ in place of $P_k^R(x)$ in the above algorithm and analysis.

If we want exact recovery, then we need $\epsilon < \frac1{3s}$, hence $\displaystyle x = O\left(\frac{\sigma_{n - k}(I - T_0)k}{||E||_2n}\right)$.  This is no better than the guarantees in Section~\ref{sec:l2diff}; hence, the rounding trick used in this section is only useful for \emph{approximate} recovery of \emph{one} cluster.  It would be nice to show that it can be iterated to recover additional clusters, but it might be difficult to show that the errors don't accumulate too much as we iterate.

Note that this averaging trick \emph{does} give a better guarantee for exact recovery in the planted partition problem.  The reason is the randomness of the input: once we have an approximate cluster we can recover it exactly \emph{with high probability} through a simple error correcting procedure; see~\cite[Section~8.2]{cole2019recovering}.

\section{Determining the threshold empirically}

In Section~\ref{sec:l2diff} we use $\tau = 2\epsilon$ as the threshold in Algorithm~\ref{alg:exact}.  Thus, the algorithm is assumed to have access to the value of $\epsilon$.  Of course, this may not be the case in practice, as $\epsilon$ depends on $\sigma_{n - k}(I - T_0)$, $x$, and $||E||_2$.  While $\sigma_{n - k}(I - T_0)$ may be approximated with $\sigma_{n - k}(I - T(x))$ and $||E||_2$ assumed to be a constant (e.g.\ 1), knowing $x$ means we know exactly how small a perturbation $T(x)$ is from $T_0$, which may not be a realistic assumption.
There are several ways around this.

\subsection{If we know the two largest cluster sizes}

If $2\epsilon < \sqrt{\frac1{n_1} + \frac1{n_2}} - 2\epsilon$, then $2\epsilon < \frac12\sqrt{\frac1{n_1} + \frac1{n_2}} < \sqrt{\frac1{n_1} + \frac1{n_2}} - 2\epsilon$.  Hence, we can use $\tau = \frac12\sqrt{\frac1{n_1} + \frac1{n_2}}$ as the cutoff in Algorithm~\ref{alg:exact} instead of $\tau = 2\epsilon$.  But what if our algorithm doesn't even have access to $n_1$ and $n_2$?

\subsection{If no cluster sizes are known}

Let us call a positive number ``Small'' if it is $\leq 2\epsilon$, ``Large'' if it is $\geq \sqrt{\frac1{n_1} + \frac1{n_2}} - 2\epsilon$.  Our goal is to distinguish the Small values of $||P^R(x)e_i - P^R(x)e_j||_2$ from the Large ones, without actually knowing $\epsilon$, $n_1$, and $n_2$.  By Theorem~\ref{thm:l2diff}, this tells us exactly which pairs of indices are in the same cluster and which pairs are in different clusters.

Let $d_1 \geq \ldots \geq d_{n \choose 2}$ be the ${n \choose 2}$ values of $||P^R(x)e_i - P^R(x)e_j||_2$ for all $i \neq j$, arranged in nonincreasing order.  By Theorem~\ref{thm:l2diff} each $d_i$ is either Large or Small.  We want to find the unique $i$ such that $d_i$ is Large and $d_{i + 1}$ is Small.  We could just try using each $d_i$ as the cutoff $\tau$ in Algorithm~\ref{alg:exact}, construct the corresponding decoupled transition matrix $\hat T_i$ (whose blocks are sub-stochastic rather than stochastic), and picking the $i$ that minimizes $||\hat T_i - T(x)||$ for an appropriately chosen norm.

However, if we assume not only that $2\epsilon < \sqrt{\frac1{n_1} + \frac1{n_2}} - 2\epsilon$, but that in fact $4\epsilon < \sqrt{\frac1{n_1} + \frac1{n_2}} - 2\epsilon$, then we can narrow the search space to just $O(\log n)$ possibilities.  If $4\epsilon < \sqrt{\frac1{n_1} + \frac1{n_2}} - 2\epsilon$, then any Large number must be at least \emph{twice} any Small number.  Hence, if $d_i$ is Large and $d_{i + 1}$ is Small, then it must be the case that $d_i \geq 2d_{i + 1}$.  Thus, we can limit our search space to the indices $i$ such that $d_i \geq 2d_{i + 1}$.  We will call such an index $i$ a \emph{gap} index.

Now, let $m$ be the largest index such that $d_m$ is Large.  Then of course
\[d_m \geq \sqrt{\frac1{n_1} + \frac1{n_2}} - 2\epsilon > \frac12\sqrt{\frac1{n_1} + \frac1{n_2}}.\]
We can also easily show using~\eqref{eqn:coldiff} and~\eqref{eqn:pi-pj} that
\[d_1 \leq \sqrt{\frac1{n_k} + \frac1{n_{k - 1}}} + 2\epsilon < \sqrt{\frac1{n_k} + \frac1{n_{k - 1}}} + \frac12\sqrt{\frac1{n_1} + \frac1{n_2}}.\]
Now let $h$ be the number of gap indices $\leq m$.  Since $d_i$ is halved after every gap index, we have $d_m \leq \left(\frac12\right)^hd_1$.  Hence,
\begin{eqnarray*}
h 	& \leq	& \log_2\left(\frac{d_1}{d_m}\right) 	\\
	& < 	& \log_2\left(\frac{\sqrt{\frac1{n_k} + \frac1{n_{k - 1}}} + \frac12\sqrt{\frac1{n_1} + \frac1{n_2}}}{\frac12\sqrt{\frac1{n_1} + \frac1{n_2}}}\right) 	\\
	& =		& \log_2\left(\frac{\sqrt{\frac1{n_k} + \frac1{n_{k - 1}}}}{\frac12\sqrt{\frac1{n_1} + \frac1{n_2}}} + 1\right)	\\
	& \leq	& \log_2\left(\frac{\sqrt{2 / n_k}}{\frac12\sqrt{2/n_1}} + 1\right)	\\
	& =		& \log_2\left(2\sqrt{\frac{n_1}{n_k}} + 1\right)	\\
	& \leq	& \log_2(2\sqrt n + 1).
\end{eqnarray*}
Thus, if we start with $d_1$ and try the first $\lfloor\log_2(2\sqrt n + 1)\rfloor$ gap indices we find, $d_m$ is guaranteed to be among them.  Hence, we are able to empirically determine a threshold to use in Algorithm~\ref{alg:exact} (in place of $2\epsilon$) at the expense of a logarithmic factor increase in running time.  Observe that this factor may be improved if we happen to know an upper bound for $n_1 / n_k$.

We state the algorithm explicitly here:

\begin{algorithm}[H]
\caption{Determining $\tau$ empirically}
\begin{enumerate}
\item	Let $d_1 \geq \ldots \geq d_{n \choose 2}$ be the ${n \choose 2}$ values of $||P^R(x)e_i - P^R(x)e_j||_2$ for all $i \neq j$, arranged in nonincreasing order.
\item	For $i = 1, \ldots, {n \choose 2} - 1$, if $d_i \geq 2d_{i + 1}$, run Algorithm~\ref{alg:exact} with $\tau = d_{i + 1}$.  Let $\hat T_i$ be the decoupled sub-stochastic matrix induced by the output of Algorithm~\ref{alg:exact}; that is, $\hat T_i := \bigoplus_{j = 1}^l\hat T_{i, j}$, where $\hat T_{i, 1}, \ldots, \hat T_{i, l}$ are the principle submatrices of $T(x)$ induced by the parts of the partition returned by Algorithm~\ref{alg:exact}.\label{step:gap}
\item	Stop when $\lfloor\log_2(2\sqrt n + 1)\rfloor$ indices $i$ have been tried in Step~\ref{step:gap} and return the partition that produces the $\hat T_i$ which minimizes $||\hat T_i - T(x)||$ for an appropriately chosen matrix norm $||\cdot||$.
\end{enumerate}
\end{algorithm}

Note that in this section we still assume $k$ (the number of clusters) is known a priori.  If this is not the case, we can use a similar ``search and check'' procedure to determine $k$ empirically, although it increases the running time of Algorithm~\ref{alg:exact} by a factor of $n$.

\section{Concluding remarks}

It remains open whether the restrictions on $x$ which guarantee the success of Algorithms~\ref{alg:exact} and~\ref{alg:approx} are optimal.  Moreover, it is interesting that if one wishes to use Algorithm~\ref{alg:approx} for exact exact recovery, then the guarantee is no better than that of Algorithm~\ref{alg:exact}.  Thus, it is natural to ask whether there is an algorithm for \emph{exact} recovery of a \emph{single} cluster with better guarantees (i.e.\ a looser restriction on $x$) than Algorithm~\ref{alg:exact}.

\section*{Acknowledgement}

Many thanks to Steve Kirkland for his helpful comments.

\bibliographystyle{plain}
\bibliography{projrec_markov}

\end{document}